\newcommand{\noun}[1]{\textsc{#1}}
\providecommand{\tabularnewline}{\\}
\theoremstyle{plain}
\newtheorem{thm}{\protect\theoremname}
  \theoremstyle{definition}
  \newtheorem{defn}[thm]{\protect\definitionname}
  \theoremstyle{plain}
  \newtheorem*{lem*}{\protect\lemmaname}
  \theoremstyle{plain}
  \newtheorem*{thm*}{\protect\theoremname}
  \theoremstyle{remark}
  \newtheorem*{claim*}{\protect\claimname}
  \providecommand{\claimname}{Claim}
  \providecommand{\definitionname}{Definition}
  \providecommand{\lemmaname}{Lemma}
  \providecommand{\theoremname}{Theorem}
\providecommand{\theoremname}{Theorem}
\begin{document}

\title{Component mixers and a hardness result for counterfeiting quantum
money}

\author{Andrew Lutomirski}

\date{Friday, July 1, 2011}

\maketitle
\global\long\def\op#1{\mathop\textrm{#1}\nolimits}

\renewcommand{\headrulewidth}{0pt}\thispagestyle{fancyplain}\rhead{MIT-CTP 4279}
\begin{abstract}
In this paper we give the first proof that, under reasonable assumptions,
a problem related to counterfeiting quantum money from knots \cite{luto-knot-money}
is hard. Along the way, we introduce the concept of a component mixer,
define three new classical query problems and associated complexity
classes related to graph isomorphism and group membership, and conjecture
an oracle separating QCMA from QMA.
\end{abstract}

\section{Introduction}

Quantum money from knots is a cryptographic protocol in which a mint
produces a quantum state $|\$_{\ell}\rangle$. Anyone can verify that
the state came from the mint, and hopefully it is intractable to copy
the state. The quantum state is described by a set $S$ which is partitioned
into components. (In practice $S$ is a set of knot diagrams with
bounded complexity and the components are the sets of knots with the
same Alexander polynomial.)

We hope to prove the security of an abstracted version of the protocol
in which adversaries have only black-box access to an idealized version
of knot-theoretic operations \cite{luto-qmoney-ics}. In this abstracted
version, the quantum state is a superposition of $n$-bit strings
($n$ is a security parameter chosen by the mint). All of those strings
come from a large set $S$, and the state $|\$_{\ell}\rangle=\sum_{x\in S_{\ell}}|x\rangle$
is the uniform superposition of strings in the $\ell^{\text{th}}$
component.

All parties (the mint, honest users of money, and any adversaries)
have access to two black-box operations. They have access to a {}``component
mixer'' (defined below) that invertibly maps any string to a new
almost uniformly random string in the same component. They also have
access to a labeling function that determines which component any
element is in. (In the concrete scheme, the component mixer does not
fully mix within the components. We ignore that issue here.)

To prove hardness results related to counterfeiting quantum money,
we need an appropriate computational assumption. We find this assumption
in a new class of query problems based on component mixers.

All of these problems involve a large set that is partitioned into
components. An algorithm must use black-box queries to a component
mixer to answer questions about the components. The algorithm is not
given access to a labeling function. The \noun{same component} problem
is: are two given elements in the same component? The \noun{multiple
components} problem is: is there more than one component (as opposed
to just one)? If we promise that either there is only one component
or no component contains a majority of the set, then the \noun{multiple
components} problem becomes \noun{multiple balanced components}. Finally,
on a quantum computer, the \noun{component superposition} problem
is to prepare the uniform superposition of all elements in a component
given as input one element in that component. (The classical analog,
producing a uniformly random sample from a component, is easy by assumption.)

These types of questions are natural abstractions of graph isomorphism
and group membership. Graph isomorphism is the problem of deciding
whether two graphs are equivalent up to a permutation. The complexity
of graph isomorphism is unknown on both classical and quantum computers.
Group membership, on the other hand, is the problem of deciding whether
an element of some large group is in a particular subgroup of that
group. The subgroup is specified by its generators, and the group
structure is given either as a black box or in some explicit form
such as a matrix representation. In the black-box setting, group membership
is hard on classical computers \cite{Watrous2000Groups} but has unknown
complexity on quantum computers.

For graph isomorphism, the big set would be the set of all graphs
of a given size and the components are isomorphism classes of graphs.
The component mixer permutes the vertices of a graph. Testing whether
two graphs are isomorphic reduces to the \noun{same component} problem.

For group membership, the big set would be a large group and the components
would be cosets of a subgroup that is described only by its generators.
The component mixer would multiply by an element of the subgroup.
The group membership problem reduces to an instance of \noun{same
component}: testing whether a given element is in the same component
as the identity. The \noun{multiple balanced components} problem would
determine whether the given generators generate the entire group or
a proper subgroup.

Each of these query problems naturally defines a complexity class.
SCP, MCP, and MBCP are the sets of languages that are polynomial-time
reducible to the \noun{same component} problem, \noun{multiple components}
problem and \noun{multiple balanced components} problem. We relate
all three classes to commonly-used complexity classes. Our results
are summarized in the table above.

These problems and classes are immediately interesting for two reasons.
First, if \noun{same component} is hard on a quantum computer, then
we have evidence for the security of a quantum money protocol \cite{luto-knot-money}.
Second, MCP and MBCP are candidates for a classical oracle separation
between QCMA and QMA. (Group membership does not work directly because
it has too much structure \cite{AK07}.)

\begin{table}
\begin{centering}
\begin{tabular}{c>{\centering}p{5cm}>{\centering}p{4.8cm}}
The class\ldots{} & \ldots{}is in\ldots{} & \ldots{}and oracle-separated from\tabularnewline
\hline 
SCP & NP, SZK & co-MA, hopefully BQP\tabularnewline
MCP & QMA, $\text{NP}^{\text{co-NP}}$, $\text{NP}^{\text{co-SCP}}$ & BQP, hopefully QCMA\tabularnewline
MBCP & MCP, AM, co-AM, SZK, $\text{BPP}^{\text{SCP}}$ & hopefully QCMA\tabularnewline
\end{tabular}
\par\end{centering}

\caption{\label{tab:complexity-zoo}Component problems have a home in the complexity
zoo.}
\end{table}

\section{Definitions}

Throughout this paper, we use some basic terminology. We say that
$a\in_{R}B$ if $a$ is a uniform random sample from $B$. A function
$f:\mathbb{N}\to\mathbb{R}$ is negligible if for all $y$ there exists
$N_{y}$ such that $f\left(x\right)<x^{-y}$ for all $x>N_{y}$. Intuitively,
negligible functions go to zero faster than the reciprocal of any
polynomial. Finally, the total variation distance between two distributions
with probability density functions $p$ and $q$ over a set $D$ is
\[
\frac{1}{2}\sum_{e}\left|p(e)-q(e)\right|=\sup_{A\subseteq D}\left|p(A)-q(A)\right|.
\]
 The total variation distance is sometimes referred to as the statistical
difference, and it is analogous to the trace distance between density
matrices of mixed quantum states.

All of the problems we consider are questions about a large set $S$.
For consistency in defining the size of the problems, we take $n$
to be the number of bits used to represent an element of $S$. The
set $S$ is partitioned into components, and access to the components
is given through a family of invertible maps that takes an element
of any component of $S$ to a new element of the same component. The
maps constitute a component mixer if a uniform random choice of the
map produces a uniformly random output.
\begin{defn}
A family of one-to-one maps $\left\{ M_{i}\right\} $ is a \emph{component
mixer} on a partition $\left\{ S_{1},\ldots,S_{c}\right\} $ of a
set $S$ if:\end{defn}
\begin{itemize}
\item The set $S$ is a subset of $n$-bit strings.
\item The family is indexed by a label $i$ from a set $\op{Ind}_{M}$,
and each $i$ can be encoded in $O\left(\op{poly}\left(n\right)\right)$
bits.
\item The functions $\left\{ M_{i}\right\} $ do not mix between components.
That is, for all $i$ and $a$, if $x\in S_{a}$ then $M_{i}\left(x\right)\in S_{a}$
as well.
\item The functions $\left\{ M_{i}\right\} $ instantly mix within each
component. That is, for all $a$ and $x\in S_{a}$, if $i\in_{R}\op{Ind}_{M}$,
then the total variation distance between $M_{i}\left(x\right)$ and
a uniform sample from $S_{a}$ is no more than $2^{-n-2}$.
\end{itemize}

The last condition is often easy to satisfy directly. For graph isomorphism,
if $\op{Ind}_{M}$ is the set of permutations, $M_{i}$ can apply
the permutation $i$ to a graph. For group membership, if $\op{Ind}_{M}$
is a sequence of coin flips that can generate a nearly uniform sample
from the subgroup (e.g. a straight-line program as in \cite{Babai:1991:LEV:103418.103440}),
then $M_{i}$ can multiply by the element of the subgroup implied
by the coin flips. In general, given a Markov chain over $S$ that
does not mix between components but mixes rapidly over each component,
each step of which consists of choosing uniform random sample from
a set of invertible rules and applying that rule, then iterating that
Markov chain to amplify its spectral gap will give a component mixer.

In graph isomorphism, generating a random graph, testing whether some
encoding of a graph is valid, and generating a random permutation
to apply to the vertices are all easy. Similarly, in group membership,
generating a random element of the whole group (as opposed to the
subgroup) is easy, as is generating a random element of the subgroup.
When we abstract these problems to component mixer problems, we want
the corresponding operations to be easy as well. This leads to our
definition of query access to a component mixer.
\begin{defn}
An algorithm has query access to a component mixer $\left\{ M_{i}\right\} $
if the algorithm can do each of the following operations in one query
with failure probability no more than $2^{-n}$.\end{defn}
\begin{itemize}
\item Test an $n$-bit string for membership in $S$.
\item Generate an uniform random sample from $S$.
\item Test a string for membership in $\op{Ind}_{M}$.
\item Generate an uniform random sample from $\op{Ind}_{M}$.
\item Given $s\in S$ and $i\in\op{Ind}_{M}$, compute $M_{i}\left(s\right)$.
\item Given $s\in S$ and $i\in\op{Ind}_{M}$, compute $M_{i}^{-1}\left(s\right)$.
\end{itemize}

If we are considering \emph{quantum} algorithms, we want to give the
algorithm some quantum power. For example, in graph isomorphism, generating
a uniform quantum superposition of \emph{all} graphs is easy, as is
generating a uniform quantum superposition of all members of the permutation
group \cite{grover-2002}. We give quantum component algorithms the
equivalent powers.
\begin{defn}
An algorithm has quantum query access to a component mixer $\left\{ M_{i}\right\} $
if the algorithm can do each of the following operations coherently
in one query with failure probability no more than $2^{-n}$:\end{defn}
\begin{itemize}
\item Test an $n$-bit string for membership in $S$.
\item Generate the state $\sum_{s\in S}|s\rangle$ or measure the projector
onto that state.
\item Test a string for membership in $\op{Ind}_{M}$.
\item Generate the state $\sum_{i\in\op{Ind}_{M}}|i\rangle$ or measure
the projector onto that state.
\item Compute the {}``controlled-$M$'' operator, abbreviated $\op{CM}$.
$\op{CM}$ takes three registers as input: the first is the number
$-1$, $0$, or $+1$, the second is a string $i$, and the third
is an $n$-bit string $s$. On input $|\alpha,i,s\rangle$, $\op{CM}|\alpha,i,s\rangle=|\alpha,i,M_{i}^{\alpha}\left(s\right)\rangle$
if $i\in\op{Ind}_{M}$ and $s\in S$; otherwise $\op{CM}|\alpha,i,s\rangle=|\alpha,i,s\rangle$.
\end{itemize}
As a technical detail, we assume that any algorithm given (quantum)
query access to a component mixer $\left\{ M_{i}\right\} $ knows
both $n$ and the number of bits needed to encode an element of $\op{Ind}_{M}$.

We can now state the definitions of our query problems.
\begin{defn}
The \noun{same component} problem is: given query access to a component
mixer $\left\{ M_{i}\right\} $ on a set $S$ and two elements $\left(s,t\right)\in S$,
accept if $s$ and $t$ are in the same component of $S$.
\end{defn}

\begin{defn}
The \noun{multiple components} problem is: given query access to a
component mixer $\left\{ M_{i}\right\} $ on a partition $\left\{ S_{1},\ldots,S_{c}\right\} $,
accept if $c>1$.
\end{defn}

\begin{defn}
The \noun{multiple balanced components} problem is: There is a partition
$\left\{ S_{1},\ldots,S_{c}\right\} $ with the promise that either
there is only one component or no component contains more than half
the elements in $S$. Given query access to a component mixer $\left\{ M_{i}\right\} $
on that partition and the string $0^{n}$, accept if $c>1$.
\end{defn}

On a quantum computer, we can also try to generate the uniform superposition
over a component.
\begin{defn}
The \noun{component superposition} problem is: given quantum query
access to a component mixer $\left\{ M_{i}\right\} $ on a set $S$
and an element $s\in S$, output the state 
\[
|S_{j}\rangle=\frac{1}{\sqrt{\left|S_{j}\right|}}\sum_{u\in S_{j}}|u\rangle
\]
 where $S_{j}$ is the (unknown) component containing $s$.
\end{defn}

The decision problems can also be viewed as complexity classes. We
define the class SCP to be the set of languages that are polynomial-time
reducible to the \noun{same component }problem with bounded error.
Similarly, we define MCP by reference to \noun{multiple components}
and MBCP by reference to \noun{multiple balanced components}.

\section{\label{sec:basic-props}Basic properties of component mixers}
\begin{lem*}
\emph{(Component mixers are fully connected)} If $s$ and $t$ are
in the same component, then there exists $i$ such that $t=M_{i}\left(s\right)$.\end{lem*}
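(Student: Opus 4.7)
The plan is to argue by contradiction, exploiting the tight numerical bound in the mixing condition against the fact that $S$ consists of $n$-bit strings and hence cannot be too large.

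Assume $s,t$ lie in the same component $S_a$ and suppose, toward a contradiction, that there is no $i \in \op{Ind}_M$ with $M_i(s)=t$. Let $p$ denote the distribution on $S_a$ induced by sampling $i \in_R \op{Ind}_M$ and outputting $M_i(s)$, and let $q$ denote the uniform distribution on $S_a$. By assumption $p(t)=0$, whereas $q(t)=1/|S_a|$. Using the supremum characterization of total variation distance with the singleton set $A=\{t\}$ gives
\[
\tfrac{1}{2}\sum_{e\in S_a}|p(e)-q(e)| \;\ge\; |p(\{t\})-q(\{t\})| \;=\; \frac{1}{|S_a|}.
\]

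Because $S$ is a set of $n$-bit strings, $|S_a|\le |S|\le 2^n$, and therefore the total variation distance between $M_i(s)$ and the uniform distribution on $S_a$ is at least $2^{-n}$. This contradicts the component mixer mixing condition, which requires that distance to be at most $2^{-n-2}$. Hence some $i$ with $M_i(s)=t$ must exist.

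The argument is essentially a one-line counting estimate; the only subtlety is recognizing that the quantitative bound $2^{-n-2}$ in the definition was chosen precisely to beat the crude cardinality bound $2^n$, so any single missing element in the image already violates the mixing guarantee. I do not anticipate any technical obstacle here.
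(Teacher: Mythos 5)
Your proposal is correct and takes essentially the same approach as the paper: both argue by contradiction, observing that $\Pr[M_i(s)=t]=0$ forces the total variation distance against uniform on the component to be at least roughly $1/|S_a|\ge 2^{-n-1}$, which exceeds the $2^{-n-2}$ mixing bound. The only cosmetic difference is that you invoke the supremum characterization with the singleton $\{t\}$ while the paper drops all terms but $u=t$ from the sum, giving an extra factor of $\tfrac{1}{2}$; both comfortably beat $2^{-n-2}$.
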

\begin{proof}
Assume the contrary. Suppose $s$ and $t$ are in the same component
$S_{j}$ and let $A=\left\{ M_{i}\left(s\right):i\in\op{Ind}_{M}\right\} $.
By assumption, $t\notin A$. This means that the variation distance
between $M_{i}\left(s\right)$ (for $i\in_{R}\op{Ind}_{M}$) and a
uniform sample on $S_{j}$ is 
\begin{align*}
 & \frac{1}{2}\sum_{u\in S_{j}}\left|\op{Pr}\left[M_{i}\left(s\right)=u\right]-\frac{1}{\left|S_{j}\right|}\right|\\
\ge & \frac{1}{2}\left|\op{Pr}\left[M_{i}\left(s\right)=t\right]-\frac{1}{\left|S_{j}\right|}\right|\\
= & \frac{1}{2\left|S_{j}\right|}\\
\ge & 2^{-n-1}>2^{-n-2},
\end{align*}
 which contradicts the fact that $M$ is a component mixer.
\end{proof}
A uniform quantum superposition over all of the elements in one component
is a potentially useful state. It is not obvious whether a quantum
computer can produce or verify such a state with a small number of
queries to a component mixer, but it is possible to verify that a
state is in the span of such superpositions.
\begin{lem*}
\emph{(Quantum computers can project onto component superpositions)
}A quantum computer can, with a constant number of queries to a component
mixer, measure the projector 
\[
P=\sum_{k}\left(\left|S_{k}\right|^{-\nicefrac{1}{2}}\sum_{x\in S_{k}}|x\rangle\right)\left(\left|S_{k}\right|^{-\nicefrac{1}{2}}\sum_{x\in S_{k}}\langle x|\right)=\sum_{k}\left(\frac{1}{\left|S_{k}\right|}\sum_{x,y\in S_{k}}|x\rangle\langle y|\right)
\]
 with negligible error as a function of $n$.\end{lem*}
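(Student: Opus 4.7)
The plan is to implement this measurement by a Hadamard-test-style routine. Given input $|\psi\rangle$, I would allocate an ancilla register and use one query to prepare it in the uniform superposition $|+\rangle := \frac{1}{\sqrt{|\op{Ind}_{M}|}}\sum_{i\in\op{Ind}_{M}}|i\rangle$, then apply the controlled mixer $\op{CM}$ with $\alpha=+1$ to obtain $\frac{1}{\sqrt{|\op{Ind}_{M}|}}\sum_{i}|i\rangle\, M_{i}|\psi\rangle$, and finally use one further query to measure the projector $|+\rangle\langle+|$ on the ancilla, accepting if the outcome is $1$. This uses $O(1)$ queries. A direct post-selection calculation shows that the acceptance probability equals $\|A|\psi\rangle\|^{2}$, where $A := \frac{1}{|\op{Ind}_{M}|}\sum_{i}M_{i}$, so the routine realizes the POVM element $A^{\dagger}A$, and it remains to show that $A^{\dagger}A$ is negligibly close to $P$ in operator norm.

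The analysis splits into an easy part and a harder part. For the easy part, each $M_{i}$ is one-to-one on $S$ and preserves components, so it restricts to a permutation of each finite $S_{k}$; thus $M_{i}|S_{k}\rangle = |S_{k}\rangle$, and therefore $A|S_{k}\rangle = |S_{k}\rangle$, with the analogous statement for $A^{\dagger}$. Consequently $A$ is block-diagonal with respect to $\op{image}(P)\oplus\op{image}(P)^{\perp}$ and acts as the identity on the first block. It suffices to bound $\|A|\psi\rangle\|$ for $|\psi\rangle$ supported on a single component $S_{k}$ and orthogonal to $|S_{k}\rangle$. Setting $p_{xy} := \Pr_{i\in_{R}\op{Ind}_{M}}[M_{i}(x)=y]$ and $\epsilon_{xy} := p_{xy}-1/|S_{k}|$, the component mixer condition gives $\sum_{y}|\epsilon_{xy}|\le 2^{-n-1}$ for every $x$, while $\sum_{x}\psi(x)=0$ forces the leading uniform contribution to cancel, leaving $(A\psi)(y) = \sum_{x}\epsilon_{xy}\psi(x)$. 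A Cauchy--Schwarz bound combined with the entrywise estimate $|\epsilon_{xy}|\le\sum_{y'}|\epsilon_{xy'}|\le 2^{-n-1}$ and $|S_{k}|\le 2^{n}$ yields $\sum_{x,y}\epsilon_{xy}^{2}\le 2^{-n-2}$, and hence $\|A\psi\|^{2}\le 2^{-n-2}\|\psi\|^{2}$. Combining the two blocks gives $\|A^{\dagger}A - P\|_{\mathrm{op}}\le 2^{-n-2}$, which is negligible.

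The main obstacle is the Frobenius-norm estimate in the second paragraph. A naive row-wise $\ell_{1}$ bound on $\epsilon_{xy}$ gives only a constant bound on $\|A\psi\|^{2}$ (around $1/2$), which is far from negligible. The exponential decay comes from combining the \emph{row-sum} smallness with the observation that each single entry is itself bounded by the row sum, so that summing squares over $|S_{k}|\le 2^{n}$ rows still gains an extra exponential factor. It is also essential that the $1/|S_{k}|$ part of $p_{xy}$ drops out, which in turn requires both the bijectivity of the $M_{i}$ (ensuring each column of $p_{xy}$ also sums to $1$) and the orthogonality $\psi\perp|S_{k}\rangle$; without these cancellations the bound could not possibly go below $1$.
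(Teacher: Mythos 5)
Your circuit is the same as the paper's---prepare the ancilla in the uniform superposition over $\op{Ind}_{M}$, apply the controlled mixer, and compare the ancilla against that superposition---and your identification of the effective operator $A=\left|\op{Ind}_{M}\right|^{-1}\sum_{i}\tilde{M}_{i}$ and the claim $A^{\dagger}A\approx P$ matches the key observation in the paper's proof. Your quantitative error analysis is actually \emph{more} detailed than the paper's: the paper simply asserts $\left|\op{Ind}_{M}\right|^{-1}\sum_{j}\tilde{M}_{j}\approx P$ ``from the definition,'' while you correctly extract a Frobenius-norm bound from the TV-distance condition (the bound $\sum_{x,y}\epsilon_{xy}^{2}\le 2^{-n-2}$ via row-sum times max-entry times $\left|S_{k}\right|$ is right, and the observation that doubly stochasticity of $A$ and the orthogonality $\psi\perp|S_{k}\rangle$ are both needed for the cancellation is exactly the right thing to emphasize).

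The one genuine gap is that you never uncompute the controlled-mixer. After your step ``measure $|+\rangle\langle+|$ on the ancilla,'' the state on the \emph{accept} branch is indeed $A|\psi\rangle\otimes|+\rangle$ (disentangled, and $A|\psi\rangle\approx P|\psi\rangle$), but on the \emph{reject} branch the joint state is $\left(I\otimes\left(I-|+\rangle\langle+|\right)\right)U|\psi\rangle|+\rangle$, which is entangled with the ancilla; tracing it out leaves the data register in a mixed state, not in $(I-P)|\psi\rangle$. So your routine realizes the two-outcome POVM $\left\{ A^{\dagger}A,\, I-A^{\dagger}A\right\} $ with the correct statistics and the correct accept-branch post-measurement state, but it does not implement the projective measurement $\left\{ P,I-P\right\} $ as a quantum instrument, which is what ``measure the projector $P$'' asks for. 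The paper fixes this by writing the outcome of the $|e_{0}\rangle\langle e_{0}|$ comparison coherently into a fresh flag qubit $C$ and then applying $U^{\dagger}$ to uncompute register $B$ before reading $C$; this leaves the reject branch in (approximately) $(1-P)|\psi\rangle\otimes|e_{0}\rangle\otimes|0\rangle$. Adding that uncompute (an application of $\op{CM}$ with $\alpha=-1$, plus the coherent version of your ancilla test) closes the gap, and your norm estimates already cover the resulting error analysis.
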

\begin{proof}
Starting with a state $|\psi\rangle$, we give an algorithm to measure
$P$ on $|\psi\rangle$ . The algorithm uses three registers: $|\psi\rangle$
starts in register $A$; registers $B$ and $C$ are ancillas. $B$'s
computational basis is $\op{Ind}_{M}$ and $C$ holds a single bit.
To simplify the notation, we write the uniform superposition in register
$B$ as $|e_{0}\rangle=\left|\op{Ind}_{M}\right|^{-\nicefrac{1}{2}}\sum_{i}|i\rangle_{M}$.
The algorithm is:
\begin{enumerate}
\item Initialize register $B$ to $|e_{0}\rangle_{B}$ and register $C$
to $|0\rangle$. This gives the state 
\[
|\phi_{1}\rangle=|\psi\rangle_{A}|e_{0}\rangle_{B}|0\rangle_{C}.
\]

\item Apply controlled-$M$ with the control set to 1. This is equivalent
to applying $M$ unconditionally. Let $\tilde{M}_{j}$ be the quantum
operator corresponding to the action of $M_{j}$ on register $A$.
That is, $\langle s'|\tilde{M}_{j}|s\rangle=\langle s'|M_{j}(s)\rangle$.
With this notation, the action of this step on registers $A$ and
$B$ is $U=\left(\sum_{j}\tilde{M}_{j}\otimes|j\rangle\langle j|\right)$.
The resulting state is 
\begin{align*}
|\phi_{2}\rangle & =U|\psi\rangle_{A}|e_{0}\rangle_{B}|0\rangle_{C}\\
 & =|e_{0}\rangle_{BB}\langle e_{0}|U|\psi\rangle_{A}|e_{0}\rangle_{B}|0\rangle_{C}+\left(1-|e_{0}\rangle_{BB}\langle e_{0}|\right)U|\psi\rangle_{A}|e_{0}\rangle_{B}|0\rangle_{C}
\end{align*}

\item Apply the unitary operator $|e_{0}\rangle_{BB}\langle e_{0}|\otimes X_{C}+\left(1-|e_{0}\rangle_{BB}\langle e_{0}|\right)\otimes I_{C}$.
This sets register $C$ to $|1\rangle$ is register $B$ is still
in the state $|e_{0}\rangle$. The state is now 
\[
|\phi_{3}\rangle=|e_{0}\rangle_{BB}\langle e_{0}|U|\psi\rangle_{A}|e_{0}\rangle_{B}|1\rangle_{C}+\left(1-|e_{0}\rangle_{BB}\langle e_{0}|\right)U|\psi\rangle_{A}|e_{0}\rangle_{B}|0\rangle_{C}.
\]

\item Uncompute step 2 by applying $U^{\dagger}$. This gives 
\begin{align*}
|\phi_{4}\rangle & =U^{\dagger}|e_{0}\rangle_{BB}\langle e_{0}|U|\psi\rangle_{A}|e_{0}\rangle_{B}|1\rangle_{C}+U^{\dagger}\left(1-|e_{0}\rangle_{BB}\langle e_{0}|\right)U|\psi\rangle_{A}|e_{0}\rangle_{B}|0\rangle_{C}\\
 & =\left(U^{\dagger}|e_{0}\rangle_{BB}\langle e_{0}|U\right)|\psi\rangle_{A}|e_{0}\rangle_{B}|1\rangle_{C}+\left(1-U^{\dagger}|e_{0}\rangle_{BB}\langle e_{0}|U\right)|\psi\rangle_{A}|e_{0}\rangle_{B}|0\rangle_{C}.
\end{align*}
 To simplify this result, observe that the matrix $\left|\op{Ind}_{M}\right|^{-1}\sum_{j}\tilde{M}_{j}$
is the Markov matrix obtained by applying one of the $M_{i}$ uniformly
at random to an element of $S$. From the definition of a component
mixer, $\left|\op{Ind}_{M}\right|^{-1}\sum_{j}\tilde{M}_{j}\approx P$
. Furthermore, $P|\psi\rangle$ has the form $\alpha\sum_{x\in S_{a}}|x\rangle$
for some $a$ and $\alpha$, and $M_{k}$ preserves the set $S_{a}$,
so $\tilde{M}_{k}P|\psi\rangle=P|\psi\rangle$ for all $k$. Using
these observations, we can simplify 
\begin{align*}
U^{\dagger}|e_{0}\rangle_{BB}\langle e_{0}|U|\psi\rangle_{A}|e_{0}\rangle_{B} & =\left(\sum_{k}\tilde{M}_{k}^{\dagger}\otimes|k\rangle_{BB}\langle k|\right)|e_{0}\rangle_{BB}\langle e_{0}|\left(\sum_{j}\tilde{M}_{j}\otimes|j\rangle\langle j|\right)|\psi\rangle_{A}|e_{0}\rangle_{B}\\
 & =\left(\sum_{k}\tilde{M}_{k}^{\dagger}\otimes|k\rangle_{BB}\langle k|\right)|e_{0}\rangle_{B}\left|\op{Ind}_{M}\right|^{-1}\left(\sum_{j}\tilde{M}_{j}\right)|\psi\rangle_{A}\\
 & \approx\left(\sum_{k}\tilde{M}_{k}^{\dagger}\otimes|k\rangle_{BB}\langle k|\right)P|\psi\rangle_{A}|e_{0}\rangle_{B}\\
 & =P\left(\sum_{k}|k\rangle_{BB}\langle k|\right)|\psi\rangle_{A}|e_{0}\rangle_{B}\\
 & =P|\psi\rangle_{A}|e_{0}\rangle_{B}.
\end{align*}
 Plugging this in, we have 
\[
|\phi_{4}\rangle\approx P|\psi\rangle_{A}|e_{0}\rangle_{B}|1\rangle_{C}+\left(1-P\right)|\psi\rangle_{A}|e_{0}\rangle_{B}|0\rangle_{C}
\]
 with negligible error.
\end{enumerate}
At this point, register $B$ is unentangled with the rest of the system,
register $C$ contains the outcome of the measurement we wanted, and
register $A$ contains the correct final state.
\end{proof}
On a quantum computer, \noun{same component} reduces to \noun{component
superposition}: given two initial elements, a swap test can decide
with bounded error whether their respective component superpositions
are the same state or non-overlapping states.

\section{Placing component mixer problems in the complexity zoo}

\subsection{Inclusions}

Several of the complexity class relationships in Table~\ref{tab:complexity-zoo}
are straightforward. \noun{Multiple components} is a relaxation of
\noun{multiple balanced components}, so $\op{MBCP}\subseteq\op{MCP}$.
The {}``component mixers are fully connected''  lemma implies a
simple NP algorithm for \noun{same component}, so $\op{SCP}\subseteq\op{NP}$.
\noun{Multiple components} can be restated as {}``do there exist
two objects that are \emph{not} in the same component?'', so $\op{MCP}\subseteq\op{NP}^{\op{co-SCP}}$
and hence $\op{MCP}\subseteq\op{NP}^{\op{co-NP}}$. 

In the appendix, we give two Arthur-Merlin protocols for \noun{multiple
balanced components}:
\begin{itemize}
\item A protocol to prove a {}``yes'' answer. In this protocol, Merlin
solves the \noun{same component} problem on input given by Arthur
(appendix \ref{MBCP_AM}).
\item A protocol to prove a {}``no'' answer (appendix \ref{MBCP_co-AM}).
\end{itemize}
The existence of these protocols implies that $\op{MBCP}\subseteq\op{AM},\op{BPP}^{\op{SCP}},\text{ and }\op{co-AM}$.
We also give a QMA protocol for \noun{multiple components} (see appendix~\ref{MCP_QMA}).

\noun{Same component} is reducible to \noun{statistical difference}:
to test whether $s$ and $t$ are in the same component, choose $i,j\in_{R}\op{Ind}_{M}$
and test whether $M_{i}(s)$ and $M_{j}(t)$ have the same distribution.
\noun{Statistical difference} is complete for SZK, so $\op{SCP}\subseteq\op{SZK}$
\cite{SahaiVadhanSZK}.

\noun{Multiple balanced components} also reduces to \noun{statistical
difference}: choose $a,b\in_{R}S$ and $i,j\in_{R}\op{Ind}_{M}$.
If there are multiple balanced components, then the predicate that
the first two and last two elements of $\left(a,M_{i}(a),b,M_{j}(b)\right)$
are in the same component holds w.p.\ 1, whereas the same predicate
holds on four independent uniform samples from $S$ w.p.\ at most
$\nicefrac{1}{4}$. This means that the variation distance between
$\left(a,M_{i}(a),b,M_{j}(b)\right)$ and four independent samples
is at least $\nicefrac{3}{4}$. If, on the other hand, there is only
one component, then $\left(a,M_{i}(a),b,M_{j}(b)\right)$ is negligibly
different four independent samples from $S$. Therefore, \noun{multiple
balanced components} reduces to \noun{statistical difference} on the
distribution of $\left(a,M_{i}(a),b,M_{j}(b)\right)$ versus four
independent uniform samples from $S$. Hence $\op{MBCP}\subseteq\op{SZK}$.

\subsection{Separations}

SCP contains group membership (relative to any oracle) and group membership
is not in co-MA for black-box groups \cite{Watrous2000Groups}, so
$\op{SCP}\nsubseteq\text{co-MA}$ relative to an oracle.

The quantum query complexity of \noun{multiple components} is exponential
by reduction from the Grover problem (see appendix~\ref{sec:MCP_not_BQP}).
This implies the existence of an oracle separating MCP and BQP.

\subsection{Conjectured separations}

We conjecture that there is no QCMA or co-QCMA proof for \noun{multiple
components} or even \noun{multiple balanced components}, which would
imply the existence of an oracle separating MBCP from QMA and hence
QCMA from QMA.

We further conjecture that \noun{multiple balanced components} has
superpolynomial randomized and quantum query complexity. This conjecture
would imply that MBCP is separated from BPP and BQP by an oracle.

\section{A hardness result for counterfeiting quantum money}

We are now ready to prove a hardness result for counterfeiting quantum
money. Recall that the quantum money state is defined \cite{luto-qmoney-ics,luto-knot-money}
as 
\[
|\$_{\ell}\rangle=\sum_{x\in S_{\ell}}|x\rangle
\]
 where $S_{\ell}$ is a component of a partition of a big set $S$
and an adversary is given access to a component mixer for that partition.
Unlike the other component mixer problems we have discussed, an adversary
also has access to a labeling function $L$ that maps each element
of $S$ to a label that identifies which component that element is
in.

We show that, if an attacker is given one copy of $|\$_{\ell}\rangle$
and \emph{measures} it in the computational basis, then, under reasonable
assumptions, the attacker cannot recreate the state. That is, given
some $s\in S_{\ell}$ (i.e. the measurement outcome), it is hard to
produce $|\$_{\ell}\rangle$. We call this type of attack \noun{simple
counterfeiting}. Our assumption is that the quantum query complexity
of \noun{same component} is superpolynomial.
\begin{defn}
The \noun{simple counterfeiting} problem is: given quantum query access
to a component mixer $\left\{ M_{i}\right\} $ on a set $S$, quantum
query access to a function $L$ that maps each element of $S$ to
a unique label identifying the component containing that element,
and an element $s\in S$, output the state 
\[
|S_{j}\rangle=\frac{1}{\sqrt{\left|S_{j}\right|}}\sum_{u\in S_{j}}|u\rangle
\]
 where $S_{j}$ is the component containing $s$. 
\end{defn}
\noun{Simple counterfeiting} is the same problem as \noun{component
superposition} except that the algorithm also has access to the labeling
function. This makes the problem seem easier; for example, \noun{same
component} and \noun{multiple balanced components} both become trivial
with access to the labeling function. We show that the labeling function
is unhelpful for the purpose of \noun{simple counterfeiting}.
\begin{thm*}
If the quantum query complexity of \noun{component superposition}
is superpolynomial, then the quantum query complexity of \noun{simple
counterfeiting} is also superpolynomial.\end{thm*}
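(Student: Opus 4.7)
The plan is to prove the contrapositive: from an efficient quantum algorithm $A$ for \noun{simple counterfeiting} making $T = \mathrm{poly}(n)$ queries, I construct an algorithm $A'$ for \noun{component superposition} using $\mathrm{poly}(T)$ queries. The crucial structural observation is that the target state $|S_j\rangle$ depends only on $S_j$ and not on the particular valid labeling $L$ that $A$ is handed. Hence $A$ must succeed on \emph{every} valid labeling, so $A'$ is free to run $A$ against a simulated labeling $\tilde L$ of its own choice, provided $\tilde L$ is almost surely valid.

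The plan is to build $\tilde L$ lazily so that $A'$ can answer $A$'s queries using only mixer access and the input $s$. Think of $\tilde L$ as drawn from the uniform distribution over valid labelings; a $T$-query algorithm can only learn the partition structure restricted to its queried points. Concretely, $A'$ maintains a purified dictionary of (representative, label) pairs. On a query $\tilde L(x)$, first use a same-component subroutine --- a SWAP test against the component superposition of $s$, implementable in $O(1)$ mixer queries via the ``project onto component superpositions'' lemma of Section~\ref{sec:basic-props} --- to decide whether $x \in S_j$, returning a fixed label in that case; otherwise test $x$ against each stored representative the same way, returning the stored label on a match and a fresh random label (appended to the dictionary) on no match. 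Each simulated $\tilde L$-query therefore costs $\mathrm{poly}(T)$ mixer queries, and the whole reduction runs in $\mathrm{poly}(T)$ queries overall.

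The main obstacle is carrying out this lazy simulation \emph{coherently}. $A$'s queries to $\tilde L$ may be genuine quantum superpositions over $x$, so a classical dictionary would collapse the superposition. The dictionary must live in a quantum register, the random labels must be purified by entangling them with a never-measured ancilla, and every lookup, comparison, and insertion must be implemented as a reversible subroutine. One subtlety to handle carefully is that the same-component subroutine is only approximate, so small errors accumulate across the $T$ queries; I would bound this via a hybrid argument, replacing the ideal oracle one query at a time and charging each swap the negligible error from the projector lemma.

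Once the machinery is in place, correctness is essentially automatic: because $|S_j\rangle$ is invariant under the choice of labeling and $A$ is correct against every valid $L$, the purified randomness in $\tilde L$ averages out and $A'$ outputs $|S_j\rangle$ up to negligible error. The resulting $A'$ then solves \noun{component superposition} in $\mathrm{poly}(T)$ queries, contradicting the hypothesized superpolynomial lower bound and completing the proof.
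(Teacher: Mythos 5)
Your proposal has a genuine and fatal gap at its core: you claim that a same-component test ``is implementable in $O(1)$ mixer queries via the `project onto component superpositions' lemma.'' It is not. That lemma lets you measure the projector $P=\sum_k|S_k\rangle\langle S_k|$, i.e.\ the projector onto the \emph{span} of all component superpositions. Applied to a computational basis state $|x\rangle$, the outcome is $1$ only with probability $\langle x|P|x\rangle = 1/|S_j|$, which is exponentially small, so it does not let you prepare $|S_j\rangle$ from $|s\rangle$, and it does not let you compare two elements' components. A SWAP test ``against the component superposition of $s$'' would require you to already possess $|S_j\rangle$ --- which is precisely the output of the problem you are trying to solve. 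Worse, this is a circularity in principle, not just a missing lemma: correctly answering labeling queries for arbitrary $x$ is the \noun{same component} problem, which the paper observes reduces to \noun{component superposition}. So a reduction built on a faithful simulation of a valid $\tilde L$ cannot beat the hardness you are assuming. (There is also a secondary problem: a lazily built dictionary cannot consistently label the exponentially many components that $A$ may touch in a single superposed query --- consistency across unseen components cannot be purified away, because it requires actually knowing the partition.)

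The paper sidesteps all of this with a structurally different idea. Rather than trying to feed $A$ a \emph{valid} labeling, it feeds $A$ a deliberately \emph{invalid} one. It embeds the $n$-bit instance into a $2n$-bit universe where the original components live in row $0$ and every other string is its own singleton component; the fake label $L^{(0)}$ collapses all of row $0$ to a single label $(0,0)$, which is wrong on $\{0\}\times(S_2\cup\cdots\cup S_c)$. Random permutations $\pi,\sigma$ then hide those ``wrong'' inputs among $2^{2n}$ strings, and a BBBV/Grover argument (comparing against the genuinely valid $L^{\text{nowhere}}$ and the family $L^{[g]}$) shows that a polynomial-query algorithm cannot notice the defect: the output density matrices $\rho_0$ and $\rho_{\text{point}}$ differ only negligibly in trace distance. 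This is the key insight you are missing --- the reduction does not need to produce a consistent labeling, only one whose inconsistencies are information-theoretically invisible to a $\mathrm{poly}(n)$-query algorithm.
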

\begin{proof}
The \noun{simple counterfeiting} and \noun{component superposition}
problems differ in that \noun{simple counterfeiting} is given access
to a label that identifies components. Calculating such a label given
only a component mixer is at least as hard as solving \noun{simple
counterfeiting} in the first place, so we won't be able to provide
a valid label. The idea behind the proof is to show that a correct
labeling function is not very helpful for solving \noun{simple counterfeiting},
and that, given a component mixer, we can efficiently provide a label
that is indistinguishable from a valid label in polynomial time.

We assume for contradiction that we have a quantum query algorithm
{}``alg'' that solves \noun{simple counterfeiting} in $n^{k}$ queries
for sufficiently large $n$. Alg is given quantum query access to
a component mixer and labeling function and it is promised that the
labeling function is consistent with the component mixer. It takes
as input an element $s\in S_{j}$ for some $j$. It makes $n^{k}$
quantum queries and produces a mixed state $\rho$ as output. The
trace distance between $\rho$ and the desired output state $\frac{1}{\sqrt{\left|S_{j}\right|}}\sum_{u\in S_{j}}|u\rangle$
is a negligible function of $n$.

We give an algorithm that solves \noun{component superposition} with
high probability using alg as a subroutine.

As input, we have quantum query access to a component mixer on $n$
bits and an $n$-bit string $s$. This means that the space of $n$
bit strings is partitioned into components $S_{1},\ldots,S_{c}$ and
a set of {}``garbage'' strings $G=\left\{ 0,1\right\} ^{n}\setminus\left(S_{1}\cup\cdots\cup S_{c}\right)$,
where $c$ is the (unknown) number of components. We are not given
access to a labeling function. WLOG, we assume that $s\in S_{1}$.

We define an instance of \noun{simple counterfeiting} on $2n$-bit
strings that can be used to solve the original \noun{component superposition}
problem. To simplify the notation, we treat each $2n$-bit string
as a pair of binary numbers, each between 0 and $2^{n}-1$. In our
instance of \noun{simple counterfeiting}, the components are $\left\{ 0\right\} \times S_{1},\ldots,\left\{ 0\right\} \times S_{c}$
and $\left\{ 0\right\} \times G$. Each other element (that is, everything
that has something nonzero as its first $n$ bits) is its own component.
We use the component mixer 
\[
M_{i}^{(0)}\left(r,z\right)=\begin{cases}
\left(0,M_{i}\left(z\right)\right) & \text{ if }r=0\\
\left(r,z\right) & \text{ otherwise}
\end{cases}
\]
 and \emph{incorrect} label 
\[
L^{(0)}\left(r,z\right)=\begin{cases}
\left(0,0\right) & \text{ if }r=0\\
\left(r,z\right) & \text{ otherwise}
\end{cases}.
\]

The label $L^{\left(0\right)}$ violates the promise of \noun{simple
counterfeiting} (it assigns the same label to all of the components
in the original component mixer), so the \noun{simple counterfeiting}
algorithm run directly on $\left\{ M_{i}^{\left(0\right)}\right\} $
and $L^{(0)}$ might fail. However, the only way to detect that $L^{(0)}$
is invalid is to query it on some input of the form $\left(0,t\right)$
for $t\in S_{2}\cup\cdots\cup S_{c}$. Those inputs are an exponentially
small fraction of the domain of $L^{(0)}$ and we can hide them by
randomly permuting $L^{(0)}$ and $M_{i}^{(0)}$, giving this algorithm: 
\begin{enumerate}
\item Choose independent random permutations $\pi$ and $\sigma$ on $\mathbb{Z}_{2^{n}}\times\mathbb{Z}_{2^{n}}$.
$\pi$ indicates where each $2n$-bit string is hidden in the permuted
problem and $\sigma$ scrambles the labels. (These permutations will
take an exponential number of bits to specify, but they can be implemented
with no queries to $\left\{ M_{i}\right\} $.)
\item Run alg on $\left\{ \pi\circ M_{i}^{(0)}\circ\pi^{-1}\right\} $ and
$\sigma\circ L^{(0)}\circ\pi^{-1}$ with the initial element $\pi\left(0,s\right)$.
\item Apply $\pi^{-1}$ coherently to the quantum state that alg produces.
\item Output the last $n$ qubits of the result.
\end{enumerate}
If $\sigma\circ L^{(0)}\circ\pi$ were a valid label function for
the component mixer $\left\{ \pi\circ M_{i}^{(0)}\circ\pi\right\} $,
then this algorithm would succeed on each try w.p.\ negligibly different
from 1. We will prove that the invalidity of the labeling function
is well enough hidden that the algorithm works anyway.

To prove this, we assume the contrary: there is some $\left\{ M_{i}\right\} $
for which this algorithm fails with nonnegligible probability. This
means that the actual output of our algorithm differs nonnegligibly
in trace distance from the desired output. Such a difference would
be detectable if we knew what the correct output was; we will show
that this is impossible by solving the Grover problem more quickly
than is allowed by the BBBV theorem using alg as a subroutine.

We generalize the functions $M_{i}^{(0)}$ and $L^{(0)}$ to a larger
family that encodes a Grover search problem. We can picture $\{M_{i}^{(0)}\}$
as an embedding of the original problem in the first row of a grid
in which the first $n$ bits is the row index and the last $n$ bits
is the column index (see Figure~\ref{fig:embeddings}---the unmarked
squares are their own components). There are many other ways we could
have embedded the original problem, though. (These other embeddings
are well-defined, but they are difficult to calculate without access
to a labeling function for the original problem.) In particular, we
could have placed everything except $S_{1}$ on a different row. If
we put the other components on the $j^{\text{th}}$ row, we get 
\[
L^{(j)}\left(r,z\right)=\begin{cases}
\left(0,0\right) & \text{ if }r=0\text{ and }z\in S_{1}\\
\left(0,0\right) & \text{ if }r=j\text{ and }z\notin S_{1}\\
\left(r,z\right) & \text{ otherwise}
\end{cases}
\]
 and 
\[
M_{i}^{(j)}\left(r,z\right)=\begin{cases}
\left(0,M_{i}\left(z\right)\right) & \text{ if }r=0\text{ and }z\in S_{1}\\
\left(j,M_{i}\left(z\right)\right) & \text{ if }r=j\text{ and }z\notin S_{1}\\
\left(r,z\right) & \text{ otherwise}
\end{cases}.
\]
 Alternatively, we could leave them out entirely, giving

\[
L^{\text{nowhere}}\left(r,z\right)=\begin{cases}
(0,0) & \text{ if }r=0\text{ and }z\in S_{1}\\
(r,z) & \text{ otherwise}
\end{cases}
\]
 and 
\[
M_{i}^{\text{nowhere}}\left(r,z\right)=\begin{cases}
\left(0,M_{i}\left(z\right)\right) & \text{ if }r=0\text{ and }z\in S_{1}\\
\left(r,z\right) & \text{ otherwise}
\end{cases}.
\]

We can't efficiently implement queries to $L^{\text{nowhere}}$, $M^{\text{nowhere}}$,
$L^{(j)}$ or $M_{i}^{(j)}$ for $j\ne0$, but, if we could and if
\noun{simple counterfeiting} didn't notice that the label function
was invalid, then the output on any of instances with starting element
$\left(0,s\right)$ would be 
\[
\sum_{z\in S_{1}}|0\rangle|z\rangle,
\]
 the latter $n$ qubits of which is exactly the state we wanted.

The function $L^{\text{nowhere}}$ is a valid labeling function, but
all of the $L^{(j)}$ are invalid because they take the same value
on the images of $S_{1},\ldots,S_{c}$ even though they are in different
components. Nonetheless, they look valid as long as no one ever queries
them on the images of $S_{2},\ldots,S_{c}$, which collectively represent
less than a $2^{-n}$ fraction of all possible queries.

We formalize this notion by a reduction from the Grover problem. Suppose
$g:\mathbb{Z}_{2^{n}}\to\left\{ 0,1\right\} $ is a function that
outputs 1 at most one input. By the BBBV theorem \cite{Bennett:1997:SWQ:264393.264407},
the query complexity of distinguishing a random point function $g$
from all zeros is $O\left(2^{n/2}\right)$. Using our algorithm for
\noun{simple counterfeiting} as a subroutine, we will attempt to decide
whether $g$ maps any value to 1. We do this by allowing $g$ to select
which embedding to use. This gives the {}``labeling'' function 
\[
L^{\left[g\right]}\left(r,z\right)=\begin{cases}
(0,0) & \text{ if }r=0\text{ and }z\in S_{1}\\
(0,0) & \text{ if }g(r)=1\text{ and }z\notin S_{1}\\
(r,z) & \text{ otherwise}
\end{cases}
\]
 and component mixer 
\[
M_{i}^{\left[g\right]}\left((r,z)\right)=\begin{cases}
\left(0,M_{i}\left(z\right)\right) & \text{ if }r=0\text{ and }z\in S_{1}\\
\left(j,M_{i}\left(z\right)\right) & \text{ if }g(r)=1\text{ and }z\notin S_{1}\\
\left(r,z\right) & \text{ otherwise}
\end{cases}.
\]
 If $g(j)=1$ for some $j$, then $L^{\left[g\right]}=L^{\left(j\right)}$
and $M_{i}^{\left[g\right]}=M_{i}^{\left(j\right)}$; otherwise $L^{\left[g\right]}=L^{\text{nowhere}}$
and $M^{\left[g\right]}=M^{\text{nowhere}}$. It is possible to evaluate
either $L^{\left[g\right]}\left(r,z\right)$ or $M_{i}^{\left[g\right]}\left(r,z\right)$
with a very large number of queries to the original component mixer
$\left\{ M_{i}\right\} $ and \emph{one} query to $g\left(r\right)$.
(Evaluating the functions coherently requires a second query to $g\left(r\right)$
to uncompute garbage.)

If we choose independent random permutations $\pi$ and $\sigma$
on $\mathbb{Z}_{2^{n}}\times\mathbb{Z}_{2^{n}}$ and run alg on $\left\{ \pi\circ M_{i}^{[g]}\circ\pi{}^{-1}\right\} $
and $\sigma\circ L^{[g]}\circ\pi-1$ with initial state $\pi\left(0,s\right)$,
the output of the algorithm is some mixed state that depends on $g$.
Let $\rho_{0}$ be the density matrix of that mixed state if $g$
is all zeros and let $\rho_{\text{point}}$ be the density matrix
if $g$ is a uniformly random point function.
\begin{claim*}
$\left\Vert \rho_{0}-\rho_{\text{point}}\right\Vert _{\text{tr}}$
is a negligible function of $n$.\end{claim*}
\begin{proof}
Assume the contrary: $\left\Vert \rho_{0}-\rho_{\text{point}}\right\Vert _{\text{tr}}\ge n^{-k}$
for some fixed $k$ and an infinite sequence of values of $n$. If
we run alg on $L^{\left[g\right]}$ and $M_{i}^{\left[g\right]}$,
we can then decide whether the output is $\rho_{0}$ or $\rho_{\text{point}}$
and therefore whether $g$ is all zeros or a point function by measuring
the output. We will get the right answer w.p.\ at least $\frac{1}{2}+\frac{n^{-k}}{2}$.
We can amplify $n^{2k}$ times to get the right answer w.p.\ at least
$\nicefrac{2}{3}$ by a Chernoff bound. By assumption, alg makes $n^{r}$
queries to $L^{\left[g\right]}$ and $M_{i}^{\left[g\right]}$. That
means that, in $n^{r+2k}=o\left(n^{n/2}\right)$ queries, we can determine
whether $g\left(j\right)=1$ for any $j$, which is impossible by
the BBBV theorem. Therefore $\left\Vert \rho_{0}-\rho_{\text{point}}\right\Vert _{\text{tr}}$
is a negligible function of $n$.
\end{proof}
It follows that, if we apply $\pi{}^{-1}$ to $\rho_{0}$ and to $\rho_{\text{point}}$,
the results differ negligibly in trace distance. The result of applying
$\pi^{-1}$ to $\rho_{0}$ is the uniform superposition over $\left\{ 0\right\} \times S_{1}$
up to negligible error because if $g=0$ then alg's promise is satisfied
and it produces the correct answer. Furthermore, if we set $g\left(0\right)=1$,
then the output distribution is still $\rho_{0}$ because the distribution
of component mixers and labels seen by alg is independent of which
point function we choose. This means that $\pi{}^{-1}$ applied to
the output of alg on $\left\{ \pi\circ M_{i}^{(0)}\circ\pi^{-1}\right\} $
and $\sigma\circ L^{(0)}\circ\pi^{-1}$ with initial state $\pi\left(0,s\right)$
differs negligibly from the uniform superposition over $\left\{ 0\right\} \times S_{1}$
in trace distance.

This contradicts the assumption that there exists some input on which
our algorithm fails, so our algorithm solves \noun{component superposition}
with negligible error.
\end{proof}
We can replace the assumption that \noun{component superposition}
is hard with the assumption that \noun{same component} is hard because
\noun{same component} reduces to \noun{component superposition}.

\begin{figure}
\begin{centering}
\def\textat(#1,#2,#3){\path[anchor=base] node at (#1+0.5,#2+0.3) {#3}}
\begin{tikzpicture}[x=0.5cm,y=0.5cm]
 \path[anchor=east] node at (-0.1,0.6) {$\vdots$};
 \path[anchor=east] node at (0,1.5) {3};
 \path[anchor=east] node at (0,2.5) {2};
 \path[anchor=east] node at (0,3.5) {1};
 \path[anchor=east] node at (0,4.5) {0};
 \path[anchor=south] node at (7,6) {Last $n$ bits (not in order)};
 \draw[step=1] (0,0) grid (14,4);
 \draw (0,4) -- (0,5) -- (14,5) -- (14,4);
 \draw (3,4) -- (3,5) (7,4) -- (7,5) (9,4) -- (9,5) (12,4) -- (12,5);
 \textat(1,4,$S_1$);
 \textat(4.5,4,$S_2$);
 \textat(7.5,4,$S_3$);
 \textat(10,4,$\cdots$);
 \textat(12.5,4,$G$);
 \draw[->] (16.5,2.5) -- (15,2.5);
 \path[anchor=west] node at (17,2.6) {Components in $\{M_i^{(0)}\}$};

\begin{scope}[shift={(0,-6.5)}]
 \path[anchor=east] node at (-0.1,0.6) {$\vdots$};
 \path[anchor=east] node at (0,1.5) {3};
 \path[anchor=east] node at (0,2.5) {2};
 \path[anchor=east] node at (0,3.5) {1};
 \path[anchor=east] node at (0,4.5) {0};
 \path[anchor=south] node[rotate=90] at (-1.8,2.5) {First $n$ bits};
 \draw[step=1] (0,0) grid (14,3);
 \draw (3,4) -- (0,4) -- (0,5) -- (3,5);
 \draw[step=1] (3,4) grid (14,5);
 \draw (0,4) -- (0,3) (3,4) -- (3,3) (7,4) -- (7,3) (9,4) -- (9,3) (12,4) -- (12,3) (14,4) -- (14,3);
 \draw (1,4) -- (1,3) (2,4) -- (2,3);
 \textat(1,4,$S_1$);
 \textat(4.5,3,$S_2$);
 \textat(7.5,3,$S_3$);
 \textat(10,3,$\cdots$);
 \textat(12.5,3,$G$);
 \draw[->] (16.5,2.5) -- (15,2.5);
 \path[anchor=west] node at (17,2.6) {Components in $\{M_i^{(1)}\}$};
\end{scope}

\begin{scope}[shift={(0,-13)}]
 \path[anchor=east] node at (-0.1,0.6) {$\vdots$};
 \path[anchor=east] node at (0,1.5) {3};
 \path[anchor=east] node at (0,2.5) {2};
 \path[anchor=east] node at (0,3.5) {1};
 \path[anchor=east] node at (0,4.5) {0};
 \draw[step=1] (0,0) grid (14,4);
 \draw (0,4) -- (0,5) -- (14,5) -- (14,4);
 \draw \foreach \x in {3,4,5,6,7,8,9,10,11,12,13} {(\x,4) -- (\x,5) };
 \textat(1,4,$S_1$);
 \draw[->] (16.5,2.5) -- (15,2.5);
 \path[anchor=west] node at (17,2.6) {Components in $\{M_i^\text{nowhere}\}$};
\end{scope}
\end{tikzpicture}
\par\end{centering}

\caption{\label{fig:embeddings}There are $2^{n}+1$ ways to hide components
that are hard to label.}
\end{figure}

\section{Open problems}

There are a number of open problems related to this work.

Ideally, we would prove the impossibility of more general forms of
counterfeiting. If we could show that, given one copy of $|\$_{\ell}\rangle$
for some $\ell$, it is hard to produce a second copy of $|\$_{\ell}\rangle$,
then we would know that (in a black-box model) quantum money could
not be counterfeited. An even better result would be collision-freedom:
that is hard for anyone to produce a state of the form $|\$_{\ell}\rangle\otimes|\$_{\ell}\rangle$
by any means, even for a random $\ell$ of an attacker's choice. (Collision-freedom
implies that copying is impossible: if an attacker could copy a given
quantum money state, then the output of the algorithm would be contain
two copies of $|\$_{\ell}\rangle$ for the value of $\ell$ implied
by the input.)

It should be possible to prove quantum lower bounds on the query complexity
of \noun{same component} and \noun{multiple balanced components}.
This would strengthen the hardness result for counterfeiting quantum
money.

A classical oracle separating MCP and QCMA would also separate QMA
and QCMA. We conjecture that an appropriate worst-case component mixer
would work, but we have no proof.

A cryptographically secure component mixer could be a useful object,
and a good cryptographically secure component mixer with an associated
labeling function would give a better quantum money protocol than
quantum money from knots. (Knot invariants have all kinds of unnecessary
properties.) If we had that as well as a hardness result for generating
quantum money collisions, then quantum money would be on a sound theoretical
footing.

\section{Acknowledgments}

I would like to thank Eddie Farhi for extensive comments, Scott Aaronson
for suggestions about complexity classes, Peter Shor and Avinatan
Hassidim for encouraging me to look for reductions from problems like
graph isomorphism to counterfeiting quantum money, and Jon Kelner
for valuable thoughts about component mixers.

This work was supported by the Department of Defense (DoD) through
the National Defense Science \& Engineering Graduate Fellowship (NDSEG)
Program as well as the U.S. Department of Energy under cooperative
research agreement No. DE-FG02-94ER40818.

\bibliographystyle{utphys}
\phantomsection\addcontentsline{toc}{section}{\refname}\bibliography{same_component}

\providecommand{\href}[2]{#2}\begingroup\raggedright\begin{thebibliography}{1}

\bibitem{luto-knot-money}
E.~Farhi, D.~Gosset, A.~Hassidim, A.~Lutomirski, and P.~Shor, ``Quantum money
  from knots,'' \href{http://arxiv.org/abs/arXiv:1004.5127}{{\ttfamily
  arXiv:1004.5127 [quant-ph]}}.

\bibitem{luto-qmoney-ics}
A.~Lutomirski, S.~Aaronson, E.~Farhi, D.~G~osset, A.~Hassidim, J.~Kelner, and
  P.~Shor, ``Breaking and making quantum money: toward a new quantum
  cryptographic protocol,'' in {\em Innovations in Computer Science}.
\newblock 2010.
\newblock \href{http://arxiv.org/abs/arXiv:0219.3825}{{\ttfamily
  arXiv:0219.3825 [quant-ph]}}.
\newblock
  \url{http://conference.itcs.tsinghua.edu.cn/ICS2010/content/papers/2.html}.

\bibitem{Watrous2000Groups}
J.~Watrous, ``Succinct quantum proofs for properties of finite groups,''
  \href{http://dx.doi.org/10.1109/SFCS.2000.892141}{{\em Foundations of
  Computer Science, Annual IEEE Symposium on} (2000) 537--456}.

\bibitem{AK07}
S.~Aaronson and G.~Kuperberg, ``Quantum versus classical proofs and advice,''
  \href{http://dx.doi.org/10.4086/toc.2007.v003a007}{{\em Theory of Computing}
  {\bfseries 3} no.~1, (2007) 129--157}.

\bibitem{Babai:1991:LEV:103418.103440}
L.~Babai, \href{http://dx.doi.org/10.1145/103418.103440}{``Local expansion of
  vertex-transitive graphs and random generation in finite groups,''} in {\em
  Proceedings of the twenty-third annual ACM symposium on Theory of computing},
  STOC '91, pp.~164--174.
\newblock ACM, New York, NY, USA, 1991.

\bibitem{grover-2002}
L.~Grover and T.~Rudolph, ``Creating superpositions that correspond to
  efficiently integrable probability distributions,''
  \href{http://arxiv.org/abs/arXiv:quant-ph/0208112}{{\ttfamily
  arXiv:quant-ph/0208112}}.

\bibitem{SahaiVadhanSZK}
A.~Sahai and S.~Vadhan, ``A complete problem for statistical zero knowledge,''
  \href{http://dx.doi.org/10.1145/636865.636868}{{\em J. ACM} {\bfseries 50}
  (March, 2003) 196--249}.

\bibitem{Bennett:1997:SWQ:264393.264407}
C.~H. Bennett, E.~Bernstein, G.~Brassard, and U.~Vazirani, ``Strengths and
  weaknesses of quantum computing,''
  \href{http://dx.doi.org/10.1137/S0097539796300933}{{\em SIAM J. Comput.}
  {\bfseries 26} (October, 1997) 1510--1523}.

\end{thebibliography}\endgroup

\appendix

\section{Query protocols for component problems}

\subsection{\label{MBCP_AM}An AM query protocol \noun{multiple balanced components}}

Suppose that Merlin wants to prove to Arthur that some component mixer
has multiple balanced components. Arthur and Merlin run this protocol:

\medskip{}

\begin{center}
\noindent %
\begin{tabular}{>{\raggedright}p{0.3cm}>{\raggedright}p{5.5cm}>{\raggedright}p{5.5cm}}
 & Arthur & Merlin\tabularnewline
\hline 
1. & Choose $s_{1},s_{2}\in_{R}S$, $i\in_{R}\left\{ 1,2\right\} $ and
$j\in_{R}\op{Ind}_{M}$. & \tabularnewline
2. & Compute $t=M_{j}\left(s_{i}\right)$. & \tabularnewline
3. & Send $s_{1},s_{2},t$ to Merlin. & \tabularnewline
4. &  & If $s_{1}$ and $s_{2}$ are in different components, compute $i'=i$.
Otherwise, choose $i'\in_{R}\left\{ 1,2\right\} $.\tabularnewline
5. &  & Send $i'$ to Arthur.\tabularnewline
6. & Accept iff $i=i'$. & \tabularnewline
\end{tabular}
\par\end{center}

\medskip{}

If $\left\{ M_{i}\right\} $ has multiple balanced components, then
with probability at least $\nicefrac{1}{2}$, $s_{1}$ and $s_{2}$
are in different components. In this case, Merlin will always answer
correctly. This means that Merlin is correct w.p.\ at least $\nicefrac{3}{4}$.
If, on the other hand, $M$ has only one component, then $t$ is a
nearly uniform sample from $S$ (trace distance at most $2^{-n-2}\le\nicefrac{1}{8}$).
This means that Merlin can guess $i$ correctly with probability at
most $\nicefrac{5}{8}$. With constant overhead, this protocol can
be amplified to give soundness and completeness errors $\nicefrac{1}{3}$.

Steps 1, 2, 3, 5, and 6 can be done in a constant number of queries
to the component mixer oracle. Step 4 requires Merlin to solve the
\noun{same component} to decide whether $t$ is in the same component
as $s_{1}$, $s_{2}$, or both. This means that if Arthur had the
power of SCP (with oracle access to $\left\{ M_{i}\right\} $), then
he could run the protocol on his own.

\subsection{\label{MBCP_co-AM}A co-AM query protocol for \noun{multiple balanced
components}}

Suppose that Merlin wants to prove to Arthur that some component mixer
has a single component (as opposed to multiple balanced components).
Arthur and Merlin run this protocol:

\medskip{}

\begin{center}
\noindent %
\begin{tabular}{>{\raggedright}p{0.3cm}>{\raggedright}p{5.5cm}>{\raggedright}p{5.5cm}}
 & Arthur & Merlin\tabularnewline
\hline 
1. & Choose $s_{1},s_{2}\in_{R}S$. & \tabularnewline
2. & Send $s_{1},s_{2}$ to Merlin. & \tabularnewline
3. &  & Choose $i\in_{R}\op{Ind}_{M}$ such that $M_{i}\left(s_{1}\right)=s_{2}$.\tabularnewline
4. &  & Send $i$ to Arthur.\tabularnewline
5. & Accept iff $M_{i}\left(s_{1}\right)=s_{2}$. & \tabularnewline
\end{tabular}
\par\end{center}

\medskip{}

If there is only one component, then $s_{1}$ and $s_{2}$ are in
the same component and Merlin can find $i$ because component mixers
are fully connected. If, on the other hand, there are multiple balanced
components, then w.p.\ at least $\nicefrac{1}{2}$, $s_{1}$ and
$s_{2}$ are in different components and no such $i$ exists.

This means that this proof is complete and has soundness error at
most $\nicefrac{1}{2}$. A constant amount of amplification will reduce
the soundness error below $\nicefrac{1}{3}$.

\subsection{\noun{\label{MCP_QMA}}A quantum witness for \noun{multiple components}}

Given a {}``yes'' instance of \noun{multiple components} problem,
let $S_{1}$ and $S_{2}$ be two distinct components. Then a valid
witness state is 
\[
|\psi_{\text{MC}}\rangle=\left(\sum_{s\in S_{1}}|s\rangle\right)\otimes\left(\sum_{s\in S_{2}}|s\rangle\right).
\]

To verify the witness, Arthur first measures the projector of each
register onto the space of uniform superpositions over components
(see section~\ref{sec:basic-props}). If either measurement outputs
zero, Arthur rejects. Otherwise Arthur performs a swap test between
the two registers and accepts iff the swap test says that the registers
are different.

On a valid witness, Arthur's projections succeed with probability
close to 1. The states in the two registers have disjoint support
(both before and after the swap test), so the swap test indicates
that the states are different w.p.\ $\nicefrac{1}{2}$. Arthur therefore
accepts a valid witness w.p.\ $\nicefrac{1}{2}$.

If there is only one component then projecting onto the space of uniform
superpositions over components is equivalent to projecting onto the
uniform superposition over $S$. Therefore, on any witness, if Arthur's
projections succeed then the post-measurement state is (up to negligible
error) two copies of the uniform superposition over $S$. Those two
copies are approximately the same state, so the swap test says that
they are the same and Arthur rejects w.p.\ near 1. Standard techniques
can amplify this protocol to give completeness and soundness errors
less than $\nicefrac{1}{3}$.

\section{\textmd{\noun{\normalsize \label{sec:MCP_not_BQP}Multiple components}}\textmd{\normalsize{}
has exponential quantum query complexity}}

We can embed an instance of the Grover problem into \noun{multiple
components}. Let $g$ be the instance of the Grover problem on $n$
bits (i.e. $g:\mathbb{Z}_{2^{n}}\to\left\{ 0,1\right\} $ is either
all zeros or a point function). Let $\op{Ind}_{M}=\mathbb{Z}_{2^{n}}$
and define the component mixer 
\[
M_{i}(x)=\begin{cases}
\left(x+i\right)\op{mod}2^{n} & \text{ if }g(x)=g(x+i)=0\\
x & \text{ otherwise}
\end{cases}.
\]
 If $g$ is all zeros then there is a single component but if $g(y)=1$
then $y$ is in its own component. The function $M_{i}$ can be evaluated
with two queries to $g$, so the Grover decision problem on $g$ reduces
to \noun{multiple components} on $\left\{ M_{i}\right\} $.

Hence, by the BBBV theorem \cite{Bennett:1997:SWQ:264393.264407},
the quantum query complexity of \noun{multiple components} is $\Omega\left(2^{n/2}\right)$.
\end{document}